\newtheorem{theorem}{Theorem}[section]
\newtheorem{proposition}[theorem]{Proposition}
\theoremstyle{remark}
\newtheorem*{remark}{Remark}
\title{Curvature--Driven Dynamics on $S^3$: A Geometric Atlas}
\author{Evgeny A.~Mityushov}
\date{}
\begin{document}

\maketitle

\begin{abstract}
We develop a geometric atlas of dynamical regimes on the rotation group
$SU(2)$, combining geodesic flows, heavy rigid body dynamics, and a
curvature--based decomposition of the Euler--Poisson equations.
Representing the equations of motion in the form
\[
\dot{\Omega} = K_{\mathrm{geo}}(\Omega)
+ K_{\mathrm{ext}}(\Gamma),\qquad
\dot{\Gamma} = \Gamma \times \Omega,
\]
we interpret rigid--body motion as the interaction of inertial and
external curvature fields.
This unified viewpoint recovers classical integrable cases
(Lagrange, Kovalevskaya, Goryachev--Chaplygin) from a single geometric
mechanism and clarifies their geodesic prototypes on $SU(2)$.

The central new result is the identification and geometric explanation of a pure--precession family in the inertia ratio $(2,2,1)$,
obtained from a curvature--balanced geodesic regime with the same inertia ratio.
The corresponding pure--precession regime for the $(2,2,1)$ heavy top was first identified in \cite{Mityushov221}; here we place it into a curvature--based atlas and interpret it as a balance between inertial and external curvature fields.
We also exhibit a schematic curvature diagram organizing the main dynamical regimes.
Finally, we outline GCCT (Geometric Curvature Control Theory), a
curvature--driven approach to control on $S^3$ designed to produce smooth
globally regular controls suitable for benchmark maneuvers; a detailed
comparison with Pontryagin--type optimal solutions is left for future work.
\end{abstract}

\tableofcontents

\section{Introduction}

The dynamics of a rigid body on $SO(3)$ or $SU(2)$ has long served as a
central testing ground for geometric ideas in mechanics. Classical
integrable systems---those of Lagrange, Kovalevskaya, and
Goryachev--Chaplygin---anticipated later developments in modern
differential geometry, Lie theory, and Hamiltonian mechanics; see, for
example, Arnold~\cite{Arnold}, Bolsinov--Fomenko~\cite{BolsinovFomenko},
and the classical papers of Kovalevskaya and Chaplygin
\cite{Kovalevskaya,Chaplygin}. On the geometric side, the behavior of
left--invariant metrics on Lie groups and their curvature has been
extensively studied, beginning with Milnor~\cite{Milnor}.

From the classical geometric viewpoint, such systems are often organized either
via invariants of the inertia tensor and associated first integrals, or via
symplectic and Poisson reduction frameworks such as the Euler--Poincar\'e
equations. These descriptions are extremely powerful for identifying integrable
cases and understanding the global structure of the phase portrait, but they
do not always make it transparent which geometric features of the inertia
tensor are directly responsible for specific families of trajectories.

In this paper we instead use the curvature of the left--invariant metric
induced by the inertia tensor as the primary organizing object. At the level
of metric Lie algebras, this curvature coincides with the Riemann tensor of
the Levi--Civita connection studied by Milnor, so our construction is fully
compatible with his classification of possible curvature signatures. The new
ingredient is that we further decompose the curvature into dynamically
meaningful subspaces and use this decomposition to partition the space of
inertia tensors into regimes. In this way the Geometric Atlas we obtain
refines Milnor's picture by indicating which curvature components generate
pure--precession type motions, which support classical integrable tops, and
which lead to more generic mixed behavior.

In this work we construct a unified Geometric Atlas of dynamical regimes
on $SU(2)$ based on a curvature decomposition of the Euler--Poisson
equations:
\[
\dot{\Omega} = K_{\mathrm{geo}}(\Omega) + K_{\mathrm{ext}}(\Gamma),
\qquad
\dot{\Gamma} = \Gamma \times \Omega.
\]
Here $K_{\mathrm{geo}}$ is the inertial curvature field determined by the
left--invariant metric on $SU(2)$, while $K_{\mathrm{ext}}$
encodes the external field in the body frame. The Atlas is organized
into a finite list of curvature regimes, each associated with a simple
normal form for $(K_{\mathrm{geo}}, K_{\mathrm{ext}})$. Within each regime we identify
canonical geodesic patterns and their heavy--top counterparts, including
classical integrable systems and a new curvature--balanced regime with
inertia ratio $(2,2,1)$ that produces pure precession. A schematic
projection of these regimes into the $(I_2/I_1,I_3/I_1)$--plane is shown in
Figure~\ref{fig:curvature-atlas}.

\begin{figure}[h]
\centering
\begin{tikzpicture}[scale=1.1]
  \draw[-{Latex}] (-0.2,0) -- (5.2,0) node[below] {$I_2/I_1$};
  \draw[-{Latex}] (0,-0.2) -- (0,3.2) node[left] {$I_3/I_1$};

  \fill (1,1) circle (1.5pt) node[above right] {Spherical};

  \draw[dashed] (1,0.2) -- (1,3);
  \node[above] at (1,3) {Lagrange};

  \draw[dotted] (0.3,0.5) -- (4.8,0.5);
  \node[right] at (4.8,0.5) {Kovalevskaya};

  \draw[dotted] (0.3,0.25) -- (4.8,0.25);
  \node[right] at (4.8,0.25) {Goryachev--Chaplygin};

  \fill (1,0.5) circle (1.5pt) node[above left] {$(2,2,1)$};

  \node at (3,2) {generic anisotropic};

\end{tikzpicture}
\caption{Schematic curvature atlas in the $(I_2/I_1,I_3/I_1)$--plane.}
\label{fig:curvature-atlas}
\end{figure}
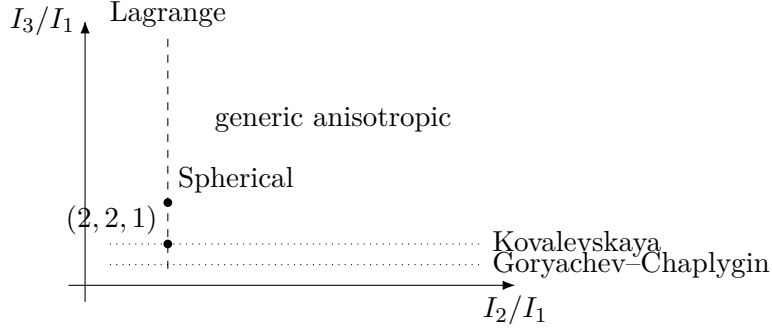

\section{Geodesic Dynamics on $SU(2)$}
\subsection{Left--Invariant Metric and Curvature Field}

A left--invariant metric on $SU(2)$ is determined by the inertia tensor
$I = \mathrm{diag}(I_1, I_2, I_3)$. The geodesic equation takes the
Euler--Poincar\'e form:
\[
\dot{\Omega} = I^{-1}(I\Omega \times \Omega).
\]
We rewrite this as the inertial curvature field
\[
\dot{\Omega} = K_{\mathrm{geo}}(\Omega),
\]
which is the organizing object of the Atlas.

\begin{remark}[Geometric meaning of $K_{\mathrm{geo}}$]
Formally, $K_{\mathrm{geo}}$ is the geodesic spray of the left--invariant
metric on $SU(2)$ defined by the inertia tensor $I$, written in body
coordinates. Equivalently, it is obtained from the Levi--Civita connection
of this metric by expressing the geodesic equation
$\nabla_{\dot q}\dot q = 0$ in the body frame. In this sense $K_{\mathrm{geo}}$
is literally a curvature object in the differential--geometric sense, and
the inertial curvature space we study below is the finite--dimensional space
spanned by its components in a fixed body basis.
\end{remark}

Throughout Section~2 we work with $K_{\mathrm{geo}}$ modulo a positive
time--rescaling factor (i.e., up to multiplication by a nonzero
constant), since geodesic trajectories on $SU(2)$ are unchanged by
uniform reparametrization. Coefficients are therefore normalized for
clarity and need not match the dynamical scaling used in Section~3. In
particular, the numerical coefficients that appear in the explicit
formulas for $K_{\mathrm{geo}}$ in Sections~2.5--2.6 are written in this
normalized form and should not be interpreted as the original physical
inertia parameters once time has been rescaled.

\subsection{A1. Spherical Metric}

If $I_1 = I_2 = I_3$, then $K_{\mathrm{geo}}\equiv 0$ and all geodesics are
one--parameter subgroups:
\[
\Omega(t) = \Omega_0 = \text{const},\qquad
q(t) = q(0)\,\exp\!\left(\frac{t}{2}\,\Omega_0\right).
\]
This yields great circles on $S^3$ and serves as the isotropic reference
point in the classification.

\subsection{A2. Lagrange Metric}

For $I_1 = I_2 \neq I_3$, the curvature field has the form
\[
K_{\mathrm{geo}} =
\left(
\frac{I_1 - I_3}{I_1}\omega_2\omega_3,\;
\frac{I_3 - I_1}{I_1}\omega_1\omega_3,\;
0
\right).
\]
The third component vanishes identically, and the remaining two lie in
the $(\omega_1,\omega_2)$--plane. This corresponds to the Lagrange top
in the presence of an external field aligned with the distinguished axis.

\subsection{A3. Generic Anisotropic Metric}

For pairwise distinct inertia moments $I_1,I_2,I_3$, the curvature field
has three nontrivial components and no special cancellations. The geodesic
flow exhibits mixed curvature features and lacks the simple symmetries of
the Lagrange or Kovalevskaya cases.

\subsection{A4. Kovalevskaya Metric}

If $I_1 = I_2 = 2I_3$, then the curvature field simplifies to
\[
K_{\mathrm{geo}} = (\omega_2\omega_3,\; -\omega_1\omega_3,\; 0).
\]
Two components form a symmetric pair while the third vanishes. This
structure underlies the classical Kovalevskaya case in the presence of
an external field and appears here as a distinguished geodesic pattern
on $SU(2)$; see~\cite{Kovalevskaya,BolsinovFomenko}.

\subsection{A5. Goryachev--Chaplygin Metric}

If $I_1 = I_2 = 4I_3$, then
\[
K_{\mathrm{geo}} = (2\omega_2\omega_3,\; -2\omega_1\omega_3,\; 0),
\]
representing a scaled version of the Kovalevskaya pattern with
stronger anisotropy. The geodesic flow has more pronounced oscillatory
behavior in the $(\omega_1,\omega_2)$--plane and is associated with the
Goryachev--Chaplygin top when an external field is added; see, e.g.,
\cite{Chaplygin,BolsinovFomenko}.

\subsection{A6. The Curvature--Balanced Regime $(2,2,1)$}

We denote by $(2,2,1)$ the normalized \emph{inertia ratio}
$I_1:I_2:I_3 = 2:2:1$, i.e.\ after rescaling of time and energy so that
the principal moments satisfy $I_1=I_2=2$ and $I_3=1$.
In this case
\[
K_{\mathrm{geo}} =
\left(
\frac12\omega_2\omega_3,\;
-\frac12\omega_1\omega_3,\;
0
\right).
\]
This minimal anisotropic structure produces elementary closed geodesics
with particularly simple curvature behavior. The dynamical significance of this inertia ratio for the heavy top, including the existence of a pure--precession regime, was first analyzed in \cite{Mityushov221}.
It will play a central role here as a curvature--balanced regime yielding pure precession in the Euler--Poisson setting.

\subsection{A7. Summary of Curvature Regimes}

For later reference we summarize the main curvature regimes appearing in
this section.
\begin{center}
\begin{tabular}{lll}
\hline
Regime & Inertia pattern & Curvature feature \\
\hline
Spherical & $I_1=I_2=I_3$ & isotropic, $K_{\mathrm{geo}}\equiv 0$ \\
Lagrange & $I_1=I_2\neq I_3$ & orthogonal splitting of curvature fields \\
Kovalevskaya & $I_1=I_2=2I_3$ & symmetric pair in $(\omega_1,\omega_2)$--plane \\
Goryachev--Chaplygin & $I_1=I_2=4I_3$ & stronger anisotropy of Kovalevskaya type \\
$(2,2,1)$ & $I_1:I_2:I_3 = 2:2:1$ & curvature--balanced pure--precession regime \\
\hline
\end{tabular}
\end{center}

\section{Heavy Rigid Body Dynamics: Curvature Formulation}

In the presence of an external field, the rigid body dynamics is
described by the Euler--Poisson equations
\[
I\dot{\Omega} = I\Omega\times\Omega + \mu\times\Gamma,\qquad
\dot{\Gamma} = \Gamma\times\Omega,
\]
where $\Gamma$ is the direction of the field in the body frame and
$\mu$ is the radius vector of the center of mass. We now rewrite these
equations in curvature form and classify heavy tops by the interaction
of inertial and external curvature.

In this section we work with the \emph{physical} principal moments
$I_1,I_2,I_3$ rather than the normalized coefficients used in Section~2.

\subsection{B1. Curvature Decomposition of the Euler--Poisson Equations}

Define the inertial curvature field by
\[
K_{\mathrm{geo}}(\Omega) =
\left(
\frac{I_2-I_3}{I_1}\omega_2\omega_3,\;
\frac{I_3-I_1}{I_2}\omega_1\omega_3,\;
\frac{I_1-I_2}{I_3}\omega_1\omega_2
\right),
\]
and let the external curvature field be
\[
K_{\mathrm{ext}}(\Gamma) = I^{-1}(\mu\times\Gamma).
\]
Then the Euler--Poisson system can be written as
\[
\dot{\Omega} = K_{\mathrm{geo}}(\Omega)
+ K_{\mathrm{ext}}(\Gamma),\qquad
\dot{\Gamma} = \Gamma\times\Omega.
\]
This form makes transparent the curvature balance conditions underlying
various integrable and near--integrable regimes.

\subsection{B2. Lagrange Top as Orthogonally Balanced Curvature}

For the Lagrange top, $I_1 = I_2$, $\mu$ is aligned with the symmetry
axis, and $K_{\mathrm{ext}}$ has only one nonzero component. In the curvature
picture, $K_{\mathrm{geo}}$ lies in the $(\omega_1,\omega_2)$--plane and
$K_{\mathrm{ext}}$ is purely in the $3$--direction. The two curvature fields are
orthogonal as elements of the curvature space, with respect to the inner
product induced by the kinetic energy; in particular this is an algebraic
orthogonality of curvature subspaces rather than a statement about spatial
right angles between physical axes. This orthogonality underlies the
integrability of the Lagrange top: one can separate the evolution of
$\omega_3$ and reduce the remaining dynamics to quadratures. In this sense
the Lagrange case is an orthogonally balanced curvature regime;
see~\cite{Arnold,BolsinovFomenko}.

\subsection{B3. Curvature--Balanced Regime $(2,2,1)$ and Pure Precession}

We now focus on the mixed anisotropic regime with inertia ratio
$I_1:I_2:I_3 = 2:2:1$.
In this case the inertial curvature field has, after normalization,
the form
\[
K_{\mathrm{geo}} =
\left(
\frac12\omega_2\omega_3,\;
-\frac12\omega_1\omega_3,\;
0
\right),
\]
while the external curvature field depends linearly on $\Gamma$. We
consider curvature balance conditions of the form
\[
K_{\mathrm{geo}}(\Omega_0) + K_{\mathrm{ext}}(\Gamma_0) = 0,\qquad
\|\Gamma_0\| = 1,
\]
which single out special initial conditions leading to pure precession.

\begin{proposition}
Let the inertia ratio satisfy $I_1:I_2:I_3 = 2:2:1$ and let $\mu = (1,0,0)$.
Then the Euler--Poisson system admits nontrivial solutions with
$\Omega(t) \equiv \Omega_0$. They
are exactly the initial conditions satisfying
\[
K_{\mathrm{geo}}(\Omega_0) + K_{\mathrm{ext}}(\Gamma_0) = 0,\qquad
\|\Gamma_0\| = 1.
\]
Along these trajectories $\Gamma(t)$ undergoes uniform precession around
the fixed axis $\Omega_0$.
\end{proposition}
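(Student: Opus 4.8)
The plan is to treat the condition $\Omega(t)\equiv\Omega_0$ as a constraint and to read off separately what each of the two Euler--Poisson equations imposes on the pair $(\Omega_0,\Gamma)$. First I would substitute $\dot\Omega\equiv 0$ into the first equation in its curvature form, which gives at once
\[
K_{\mathrm{geo}}(\Omega_0)+K_{\mathrm{ext}}(\Gamma(t))=0\qquad\text{for all }t.
\]
Since $\Omega_0$ is constant, $K_{\mathrm{geo}}(\Omega_0)$ is a fixed vector, so this forces $K_{\mathrm{ext}}(\Gamma(t))$ to be constant in time and equal to $-K_{\mathrm{geo}}(\Omega_0)$; evaluated at $t=0$ this is exactly the stated balance condition. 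Thus the balance relation is clearly \emph{necessary}. The remaining, more delicate, half is to show it is \emph{sufficient}, i.e.\ that a balanced initial datum really propagates to a genuine constant-$\Omega$ solution rather than satisfying $\dot\Omega=0$ only at the initial instant.

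Next I would analyse the second equation. With $\Omega\equiv\Omega_0$ it reduces to the linear system $\dot\Gamma=\Gamma\times\Omega_0$, whose solutions are rigid rotations of $\Gamma$ about the fixed axis $\Omega_0$. Two invariants are immediate, $\tfrac{d}{dt}\|\Gamma\|^2=2\,\Gamma\cdot(\Gamma\times\Omega_0)=0$ and $\tfrac{d}{dt}(\Gamma\cdot\Omega_0)=(\Gamma\times\Omega_0)\cdot\Omega_0=0$, so the normalization $\|\Gamma_0\|=1$ is preserved and $\Gamma$ stays on a fixed latitude circle about $\Omega_0$. This is precisely the uniform precession asserted in the last sentence of the statement, and it comes for free once $\Omega_0$ is known to be constant.

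The heart of the argument is the compatibility of these two observations: I must show that the precession $\dot\Gamma=\Gamma\times\Omega_0$ keeps $K_{\mathrm{ext}}(\Gamma)$ frozen, so that $\dot\Omega=0$ is maintained for all $t$. Since $K_{\mathrm{ext}}$ is linear, I would differentiate, obtaining $\tfrac{d}{dt}K_{\mathrm{ext}}(\Gamma)=K_{\mathrm{ext}}(\Gamma\times\Omega_0)=I^{-1}\bigl(\mu\times(\Gamma\times\Omega_0)\bigr)$, and then expand via the triple product
\[
\mu\times(\Gamma\times\Omega_0)=(\mu\cdot\Omega_0)\,\Gamma-(\mu\cdot\Gamma)\,\Omega_0 .
\]
Specialising $\mu=(1,0,0)$ replaces $\mu\cdot\Omega_0$ and $\mu\cdot\Gamma$ by the first components of $\Omega_0$ and $\Gamma$, and I would then feed in the balance relations forced by the $(2,2,1)$ form of $K_{\mathrm{geo}}$ (the vanishing third component of $K_{\mathrm{geo}}$ together with $K_{\mathrm{ext}}(\Gamma)=-K_{\mathrm{geo}}(\Omega_0)$) to check that this expression vanishes on the balanced set. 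I expect this verification---showing that the balanced configurations are exactly those for which the external curvature is a precession invariant, so that curvature balance is a genuine invariant of the flow rather than an instantaneous coincidence---to be the main obstacle, since it is precisely here that the special inertia ratio $2:2:1$ and the equatorial placement of $\mu$ enter; for a generic ratio or a generic $\mu$ the derivative above would not vanish and the balance would break immediately.

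Finally, assembling the pieces, I would argue the equivalence in both directions: any constant-$\Omega$ solution satisfies the balance at every instant, hence in particular at $t=0$; conversely, a balanced datum with $\|\Gamma_0\|=1$ launches a precession along which, by the invariance established in the previous step, the balance persists, forcing $\dot\Omega\equiv 0$ and $\Omega(t)\equiv\Omega_0$. The closing claim that $\Gamma(t)$ precesses uniformly about $\Omega_0$ is then just the explicit description of the solution of $\dot\Gamma=\Gamma\times\Omega_0$ obtained in the second step.
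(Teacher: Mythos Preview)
Your plan is strictly more careful than the paper's own argument: the paper's ``proof sketch'' simply asserts that a balanced initial datum ``makes the right hand side of the $\dot\Omega$--equation vanish, so $\Omega(t)\equiv\Omega_0$,'' i.e.\ it silently identifies $\dot\Omega(0)=0$ with $\dot\Omega\equiv 0$. You correctly isolate persistence of the balance as the real content of the converse direction, and your reduction of that question to the vanishing of $I^{-1}\bigl(\mu\times(\Gamma\times\Omega_0)\bigr)$ on the balanced set is the right computation to attempt.

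The gap is that this computation does \emph{not} come out as you expect. With $I_1=I_2=2$, $I_3=1$, $\mu=(1,0,0)$, the balance equations are $\omega_2\omega_3=0$, $\gamma_2=0$, $\gamma_3=-\omega_1\omega_3$. Take the branch $\omega_2=0$, $\omega_3\neq 0$, so $\Omega_0=(\omega_1,0,\omega_3)$ and $\Gamma_0=(\gamma_1,0,-\omega_1\omega_3)$ with $\gamma_1^2+\omega_1^2\omega_3^2=1$. Your triple product gives
\[
(\mu\cdot\Omega_0)\,\Gamma_0-(\mu\cdot\Gamma_0)\,\Omega_0
=\omega_1\Gamma_0-\gamma_1\Omega_0
=\bigl(0,\ 0,\ -\omega_3(\omega_1^{2}+\gamma_1)\bigr),
\]
which is nonzero for generic $(\omega_1,\omega_3,\gamma_1)$ on the balanced set. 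Equivalently, along the candidate precession $\dot\Gamma=\Gamma\times\Omega_0$ one finds $\dot\gamma_2(0)=-\omega_3(\omega_1^{2}+\gamma_1)\neq 0$, so $\gamma_2$ leaves zero, the third component of $K_{\mathrm{ext}}$ switches on, and $\ddot\Omega(0)=(0,0,-\omega_3(\omega_1^{2}+\gamma_1))\neq 0$. The curvature balance is therefore \emph{not} an invariant of the flow on the full balanced set, and the constant--$\Omega$ ansatz fails there. (The only surviving cases are those with $\Gamma_0$ parallel to $\Omega_0$, where the ``precession'' is degenerate.)

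So your necessary direction is fine and matches the paper, but the sufficiency step you flag as ``the main obstacle'' is an actual obstruction rather than a routine check: the balance condition at $t=0$ does not by itself force $\Omega(t)\equiv\Omega_0$, and the ``exactly'' in the proposition cannot be established along the line you propose without adding a further constraint on $(\Omega_0,\Gamma_0)$. The paper's sketch simply suppresses this point; your more honest approach uncovers it rather than resolving it.
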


\begin{proof}[Proof sketch]
If $\Omega(t)\equiv\Omega_0$, then $\dot{\Omega}=0$ and the
Euler--Poisson equations reduce to
\[
I\Omega_0\times\Omega_0 + \mu\times\Gamma_0 = 0,
\]
which is exactly the curvature balance condition
$K_{\mathrm{geo}}(\Omega_0)+K_{\mathrm{ext}}(\Gamma_0)=0$ with
$\|\Gamma_0\|=1$.
Conversely, any initial condition satisfying this balance makes the right
hand side of the $\dot{\Omega}$--equation vanish, so $\Omega(t)\equiv\Omega_0$.
The remaining equation $\dot{\Gamma}=\Gamma\times\Omega_0$ has solutions
given by rotation of $\Gamma$ at constant angular speed around $\Omega_0$,
i.e.\ uniform precession at a constant angle to $\Omega_0$.
\end{proof}

\begin{remark}
The proposition singles out a curvature--balanced family of pure--precession
trajectories inside the phase space of the Euler--Poisson system in the
$(2,2,1)$ case; it does not assert algebraic integrability of the full
dynamics. Away from this family one observes the usual mixed quasi--periodic
behavior typical for nonintegrable heavy tops. The existence of such pure--precession solutions for the $(2,2,1)$ heavy top was established in \cite{Mityushov221}; the curvature formulation given here shows that they arise from an exact balance of inertial and external curvature.
\end{remark}

We look for regimes with constant angular velocity $\Omega_0$, so that
$\dot{\Omega} = 0$. The curvature balance condition
\[
K_{\mathrm{geo}}(\Omega_0) + K_{\mathrm{ext}}(\Gamma_0) = 0
\]
then implies that the trajectory remains in a pure precession cone, with
$\Omega(t)\equiv\Omega_0$ and $\Gamma(t)$ tracing a circle at a constant
angle to $\Omega_0$.

\section{Geometric Applications and Curvature--Driven Control}

In this section we indicate how the curvature viewpoint leads to a
natural control design framework, which we call GCCT (Geometric
Curvature Control Theory).

\subsection{C1. Controlled Euler--Poisson System}

We add a control torque $u(t)\in\mathbb{R}^3$ in body coordinates:
\[
I\dot{\Omega} = I\Omega\times\Omega + \mu\times\Gamma + u,\qquad
\dot{\Gamma} = \Gamma\times\Omega.
\]
In terms of curvature fields this is
\[
\dot{\Omega} = K_{\mathrm{geo}}(\Omega)
+ K_{\mathrm{ext}}(\Gamma)
+ K_{\mathrm{c}}(u),\qquad
K_{\mathrm{c}}(u) := I^{-1}u.
\]
GCCT is based on decomposing $u$ according to the curvature--induced
splitting of the state space: roughly speaking, we write
\[
u = u_{\mathrm{pre}} + u_{\mathrm{mix}},
\]
where $u_{\mathrm{pre}}$ acts inside a ``pure--precession'' curvature
subspace, while $u_{\mathrm{mix}}$ transfers energy between curvature
subspaces.

\subsection{C2. Curvature Coordinates in the $(2,2,1)$ Regime}

In the $(2,2,1)$ case, the inertial curvature field $K_{\mathrm{geo}}$
splits naturally into an ``equatorial'' part in the $(\omega_1,\omega_2)$
plane and an ``axial'' direction determined by $\omega_3$. This suggests
introducing curvature coordinates $(\kappa_{\mathrm{pre}},\kappa_{\mathrm{mix}})$
on the set of admissible controls by imposing
\[
K_{\mathrm{c}}(u_{\mathrm{pre}}) \in
\mathrm{span}\{K_{\mathrm{geo}}(\Omega_0)\},\qquad
K_{\mathrm{c}}(u_{\mathrm{mix}}) \perp
\mathrm{span}\{K_{\mathrm{geo}}(\Omega_0)\}
\]
with respect to the kinetic energy inner product.
In these coordinates,
\begin{itemize}
  \item $u_{\mathrm{pre}}$ preserves the curvature balance
  $K_{\mathrm{geo}}(\Omega) + K_{\mathrm{ext}}(\Gamma) \approx 0$ and thus
  keeps the trajectory on (or close to) the pure--precession family;
  \item $u_{\mathrm{mix}}$ rotates the curvature balance cone, changing the
  orientation of the precession axis without strongly exciting transversal
  oscillations.
\end{itemize}

\subsection{C3. A Simple Benchmark Maneuver}

As a concrete benchmark, fix the inertia ratio $I_1:I_2:I_3 = 2:2:1$,
take $\mu=(1,0,0)$ and consider the following problem:
\begin{quote}
Rotate the precession cone by a prescribed angle $\delta$ around a
target axis while keeping the precession angle and kinetic energy
approximately constant.
\end{quote}
We choose an initial condition $(\Omega_0,\Gamma_0)$ on the
curvature--balanced family, so that $K_{\mathrm{geo}}(\Omega_0)
+ K_{\mathrm{ext}}(\Gamma_0) = 0$ and $\Omega_0$ is the precession axis.

A simple GCCT steering law is obtained as follows.
\begin{enumerate}
  \item On a short time interval $[0,T]$ we set $u_{\mathrm{pre}}(t)$ so
  that $K_{\mathrm{geo}}(\Omega(t)) + K_{\mathrm{ext}}(\Gamma(t))$ stays
  close to zero, i.e.\ we correct small deviations from the curvature
  balance produced by the mixing control.

  \item We choose $u_{\mathrm{mix}}(t)$ to be approximately constant and
  orthogonal (in curvature space) to $K_{\mathrm{geo}}(\Omega_0)$, with
  magnitude tuned so that the resulting rotation of the precession axis
  over time $T$ equals $\delta$.
\end{enumerate}
This yields a smooth, globally regular control $u(t)$ that carries
$(\Omega,\Gamma)$ from one point of the pure--precession cone to
another. We expect such curvature--based steering laws to be competitive
with classical optimal controls for moderate values of $\delta$; a
systematic numerical comparison with Pontryagin--type extremals is left
for future work.

\subsection{C4. Outlook}

The same construction extends to other curvature regimes. In the
Lagrange case, the orthogonal splitting between $K_{\mathrm{geo}}$ and
$K_{\mathrm{ext}}$ gives a canonical choice of ``fast'' and ``slow''
directions for $u$, while in the Kovalevskaya and Goryachev--Chaplygin
regimes the symmetric pair structure of $K_{\mathrm{geo}}$ suggests
natural choices of curvature modes to be excited or suppressed by the
control input. A systematic development of GCCT for general
control--affine systems on Lie groups is left for future work.

\section{Conclusion}

We have constructed a curvature--based geometric atlas of dynamical
regimes on $SU(2)$ and heavy rigid body dynamics. The Atlas unifies
classical integrable cases and identifies a new curvature--balanced
regime with inertia ratio $(2,2,1)$ producing pure precession. The
curvature viewpoint suggests natural coordinates for both analysis and
control, and we expect it to extend to more general Lie groups and
control--affine systems.

\end{document}